\newtheorem{theorem}{Theorem}
\newtheorem{remark}{Remark}
\newcommand{\rd}{\,\mathrm{d}}
\newcommand{\EE}{\mathbb{E}}
\newcommand{\RR}{\mathbb{R}}
\newcommand{\Var}{\mathrm{Var}}
\begin{document}

\title{Computing the variance of a conditional expectation via non-nested Monte Carlo\thanks{The work of T.~G. is supported by JSPS Grant-in-Aid for Young Scientists No.~15K20964.}}

\author{Takashi Goda\thanks{Graduate School of Engineering, The University of Tokyo, 7-3-1 Hongo, Bunkyo-ku, Tokyo 113-8656, Japan. ({\tt goda@frcer.t.u-tokyo.ac.jp})}}

\date{\today}

\maketitle

\begin{abstract}
Computing the variance of a conditional expectation has often been of importance in uncertainty quantification.
Sun et al.\ has introduced an unbiased nested Monte Carlo estimator, which they call $1\frac{1}{2}$-level simulation since the optimal inner-level sample size is bounded as the computational budget increases.
In this letter we construct unbiased non-nested Monte Carlo estimators based on the so-called pick-freeze scheme due to Sobol'.
An extension of our approach to compute higher order moments of a conditional expectation is also discussed.
\end{abstract}
\emph{Keywords}: variance of a conditional expectation, Monte Carlo estimation, pick-freeze scheme, bias correction

%%%%%%%%%%%%%%%%%%%%%%%%%%%%%%%%%%%%%%%%%%%%%%%%%%%%%%%%%%%%%%%%%%%%%%%%%%%%%%%%%%%%%%%%%%%%%%%%%%%
%%%%%%%%%%%%%%%%%%%%%%%%%%%%%%%%%%%%%%%%%%%%%%%%%%%%%%%%%%%%%%%%%%%%%%%%%%%%%%%%%%%%%%%%%%%%%%%%%%%
%%%%%%%%%%%%%%%%%%%%%%%%%%%%%%%%%%%%%%%%%%%%%%%%%%%%%%%%%%%%%%%%%%%%%%%%%%%%%%%%%%%%%%%%%%%%%%%%%%%
\section{Introduction}\label{sec:1}
Let $X$ be a random variable with probability density function $p_X$ defined on $\Omega_X$, and let $f\colon \Omega_X\to \RR$ be a function.
For another random variable $Y$ which is correlated with $X$, we are interested in computing the \emph{variance of a conditional expectation}
\begin{align}
\Var_Y\left[\EE_{X|Y}[f]\right] & := \int_{\Omega_Y}\left( \int_{\Omega_X}f(x)p_{X|Y=y}(x)\rd x -\mu \right)^2 p_Y(y) \rd y \label{eq:Var_E1} \\
& \: = \int_{\Omega_Y}\left( \int_{\Omega_X}f(x)p_{X|Y=y}(x)\rd x\right)^2 p_Y(y) \rd y -\mu^2 ,\label{eq:Var_E2}
\end{align}
where $p_Y$ and $p_{X|Y=y}$ denote the probability density function (defined on $\Omega_Y$) of $Y$ and the conditional probability density function of $X$ given $Y=y$, respectively, and further $\mu$ is defined by
\begin{align*}
\mu := \int_{\Omega_Y}\int_{\Omega_X}f(x)p_{X|Y=y}(x)p_Y(y) \rd x \rd y = \int_{\Omega_X}f(x)p_X(x)\rd x.
\end{align*}
It follows from the well-known variance decomposition formula that 
\begin{align}\label{eq:decomp}
\Var_Y\left[\EE_{X|Y}[f]\right] = \Var_X[f] - \EE_{Y}\left[ \Var_{X|Y}[f]\right],
\end{align}
where each term on the right-hand side is defined similarly.

The quantity $\Var_Y\left[\EE_{X|Y}[f]\right]$ has been used in the area of uncertainty quantification.
For instance, in \cite{ZW03}, Zouaoui and Wilson used $\Var_Y\left[\EE_{X|Y}[f]\right]$ as a quality measure of uncertainty on the mean time in a single-server queueing system due to uncertainty on the parameters.
Another usage of $\Var_Y\left[\EE_{X|Y}[f]\right]$ which we have in mind is as follows:
Assume that $X$ denotes a set of uncertain simulation inputs and $Y$ does a sample observation data.
In the absence of the data $Y$, the prior variance $\Var_X[f]$ measures the uncertainty of a simulation output $f$.
If the data $Y$ is available, the uncertainty of $f$ after knowing $Y=y$ is represented by the posterior variance $\Var_{X|Y=y}[f]$.
Thus the uncertainty of $f$ is expected to be reduced to $\EE_{Y}\left[ \Var_{X|Y}[f]\right]$ by obtaining the data $Y$.
It can be seen from the identity (\ref{eq:decomp}) that $\Var_Y\left[\EE_{X|Y}[f]\right]$ quantifies how much the uncertainty of $f$ can be reduced before and after obtaining the data $Y$.

By the definition (\ref{eq:Var_E1}), it seems natural to use a nested Monte Carlo estimator for $\Var_Y\left[\EE_{X|Y}[f]\right]$.
Recently in \cite{SAS11}, Sun et al. have introduced the following unbiased nested Monte Carlo estimator: 
For positive integers $K$ and $n_1,\ldots,n_K$, let $y_1,\ldots,y_K$ be sampled independently from $p_Y$, and for $k=1,\ldots,K$, let $x_{1k},\ldots,x_{n_k k}$ be sampled independently but conditionally from $p_{X|Y=y_k}$.
Moreover let $C=n_1+\cdots + n_K$,
\begin{align*}
\overline{f}_k = \frac{1}{n_k}\sum_{j=1}^{n_k}f(x_{jk})\quad  \text{and}\quad \overline{\overline{f}} = \frac{1}{C}\sum_{k=1}^{K}\sum_{j=1}^{n_k}f(x_{jk}).
\end{align*}
Then the quantity of interest $\Var_Y\left[\EE_{X|Y}[f]\right]$ is estimated by
\begin{align}\label{eq:sun_et_al}
W := \frac{SS_{\tau}-\frac{K-1}{C-K}SS_{\epsilon}}{C-\sum_{k=1}^{K}n_k^2/C},
\end{align}
where
\begin{align*}
SS_{\tau}= \sum_{k=1}^{K}n_k (\overline{f}_k-\overline{\overline{f}})^2\quad \text{and}\quad SS_{\epsilon}= \sum_{k=1}^{K}\sum_{j=1}^{n_k}(f(x_{jk})-\overline{f}_k)^2.
\end{align*}
The interesting property of the estimator $W$ is that the optimal inner-level sample sizes $n_1,\ldots,n_K$ remain bounded above as the total computational budget $C$ increases.
That is, there is no need to increase both the inner- and outer-level sample sizes simultaneously for making the approximation error converge to zero, and thus, it can be inferred from \cite[Equation~(10)]{SAS11} that the estimator $W$ achieves the Monte Carlo root mean square error (rmse) of order $C^{-1/2}$ asymptotically.
Because of this nice property, Sun et al.\ have referred to their estimator as $1\frac{1}{2}$-level simulation.
However, it may require a precomputation step to choose proper inner-level sample sizes depending on a problem at hand and a given cost $C$.

In this letter, by assuming that i.i.d.\ samplings from $p_Y$ and $p_{X|Y=y}$ for any $y\in \Omega_Y$ are possible, we construct several unbiased \emph{non-nested} Monte Carlo estimators for $\Var_Y\left[\EE_{X|Y}[f]\right]$.
We also show that our approach can be extended in a straightforward way to compute higher order moments of a conditional expectation.
We note that our assumption is same as that considered in \cite{SAS11}.
Since our estimators are no longer of the nested form, we do not need to take care of a proper choice of inner-level sample sizes, and our estimators are naturally expected to achieve the Monte Carlo rmse of order $C^{-1/2}$.
Our idea for constructing non-nested estimators stems from the pick-freeze scheme due to Sobol' \cite{S90,S01}, which was originally introduced for computing variance-based sensitivity indices and has been thoroughly studied in the context of global sensitivity analysis by Saltelli \cite{Sal02}, Owen \cite{O13}, Janon et al.\ \cite{JKLNP14}, and Owen et al.\ \cite{ODC14} to list just a few.
In fact, in that context, the quantity $\Var_Y\left[\EE_{X|Y}[f]\right]$ corresponds to the so-called first order sensitivity index, if $Y$ denotes a subset of uncertain simulation inputs contained in $X$.
Thus our result of this letter can be regarded as a generalization of the known results on variance-based sensitivity analysis.

The remainder of this letter is organized as follows.
In the next section, we introduce four straightforward non-nested Monte Carlo estimators; one based on Mauntz \cite{M02} and Kucherenko et al.\ \cite{KFSM11} is unbiased whereas the other three essentially based on Janon et al.\ \cite{JKLNP14} is biased.
In the third section, we give bias corrections of the latter estimators.
In the fourth section, we discuss an extension of our approach to compute higher order moments.
We conclude this letter with numerical experiments in the last section.
%%%%%%%%%%%%%%%%%%%%%%%%%%%%%%%%%%%%%%%%%%%%%%%%%%%%%%%%%%%%%%%%%%%%%%%%%%%%%%%%%%%%%%%%%%%%%%%%%%%
%%%%%%%%%%%%%%%%%%%%%%%%%%%%%%%%%%%%%%%%%%%%%%%%%%%%%%%%%%%%%%%%%%%%%%%%%%%%%%%%%%%%%%%%%%%%%%%%%%%
%%%%%%%%%%%%%%%%%%%%%%%%%%%%%%%%%%%%%%%%%%%%%%%%%%%%%%%%%%%%%%%%%%%%%%%%%%%%%%%%%%%%%%%%%%%%%%%%%%%
\section{Non-nested Monte Carlo}\label{sec:2}
The key ingredient of the pick-freeze scheme lies in how to deal with the square appearing in the first and second terms of (\ref{eq:Var_E2}).
It is easy to see from Fubini's theorem that we can rewrite the first term of (\ref{eq:Var_E2}) into
\begin{align}
& \int_{\Omega_Y}\left( \int_{\Omega_X}f(x)p_{X|Y=y}(x)\rd x\right) \left( \int_{\Omega_X}f(x')p_{X|Y=y}(x')\rd x'\right) p_Y(y) \rd y \nonumber \\
= & \int_{\Omega_Y}\int_{\Omega_X} \int_{\Omega_X} f(x)f(x')p_{X|Y=y}(x)p_{X|Y=y}(x') p_Y(y) \rd x \rd x' \rd y. \label{eq:Var_E2_1}
\end{align}
The second term of (\ref{eq:Var_E2}), i.e., the squared expectation $\mu^2$, can be simply rewritten into
\begin{align*}
\mu^2 & = \left(\int_{\Omega_Y}\int_{\Omega_X}f(x)p_{X|Y=y}(x)p_Y(y) \rd x \rd y\right)\left(\int_{\Omega_X}f(x'')p_X(x'')\rd x''\right) \\
& = \int_{\Omega_Y}\int_{\Omega_X}\int_{\Omega_X}f(x)f(x'')p_{X|Y=y}(x)p_X(x'')p_Y(y) \rd x \rd x''\rd y ,
\end{align*}
where we used Fubini's theorem in the second equality.
Thus the quantity $\Var_Y\left[\EE_{X|Y}[f]\right]$ is given by
\begin{align*}
\Var_Y\left[\EE_{X|Y}[f]\right] & = \int_{\Omega_Y}\int_{\Omega_X} \int_{\Omega_X} \int_{\Omega_X} f(x)\left( f(x')-f(x'')\right) \\
& \quad \times p_{X|Y=y}(x)p_{X|Y=y}(x') p_X(x'') p_Y(y) \rd x \rd x' \rd x'' \rd y.
\end{align*}
Therefore, our first Monte Carlo estimator, which has some similarity to that in \cite{M02,KFSM11} for variance-based sensitivity analysis, can be constructed as
\begin{align*}
U:=\frac{1}{N}\sum_{n=1}^{N}f(x_n)\left( f(x'_n)-f(x''_n)\right),
\end{align*}
where, for each $n$, we first sample $y_n$ randomly from $p_Y$ and then sample $x_n$ and $x'_n$ independently and randomly from $p_{X|Y=y_n}$.
Further we sample $x''_n$ randomly from $p_X$, or we first sample $y''_n$ randomly from $p_Y$ (independently of $y_n$) and then sample $x''_n$ randomly from $p_{X|Y=y''_n}$.
It is obvious that $\EE[U]=\Var_Y\left[\EE_{X|Y}[f]\right]$, meaning that the estimator $U$ is unbiased.

Since the estimator $U$ requires three function evaluations for each $n$, the total computational budget $C$ equals $3N$. It is further possible to construct Monte Carlo estimators which require two function evaluations for each $n$, i.e., $C=2N$.
Let us consider an approximation of $\mu$ instead of $\mu^2$.
This can be done by using the samples $x_n$'s and $x'_n$'s commonly as either
\begin{align*}
\hat{\mu} = \frac{1}{N}\sum_{n=1}^{N}f(x_n) \quad \text{or} \quad \hat{\mu}' = \frac{1}{N}\sum_{n=1}^{N}f(x'_n) \quad \text{or} \quad \frac{\hat{\mu}+\hat{\mu}'}{2}.
\end{align*}
Using these estimators for $\mu$, we can introduce the following Monte Carlo estimators for $\Var_Y\left[\EE_{X|Y}[f]\right]$:
\begin{align*}
V_1 & := \frac{1}{N}\sum_{n=1}^{N}f(x_n)f(x'_n) - \hat{\mu}^2, \\
V_2 & := \frac{1}{N}\sum_{n=1}^{N}f(x_n)f(x'_n) - \left(\frac{\hat{\mu}+\hat{\mu}'}{2}\right)^2 = \frac{1}{N}\sum_{n=1}^{N}\left( f(x_n)- \frac{\hat{\mu}+\hat{\mu}'}{2}\right)\left( f(x'_n) - \frac{\hat{\mu}+\hat{\mu}'}{2}\right), \\
V_3 & := \frac{1}{N}\sum_{n=1}^{N}f(x_n)f(x'_n) - \hat{\mu}\hat{\mu}' = \frac{1}{N}\sum_{n=1}^{N}\left( f(x_n)- \hat{\mu}\right)\left( f(x'_n) - \hat{\mu}'\right).
\end{align*}
Note that the last two estimators are exactly of the same forms as those in \cite{JKLNP14} for variance-based sensitivity analysis.
These estimators are actually the special cases of a generalized estimator 
\begin{align}\label{eq:V_est}
V := \frac{1}{N}\sum_{n=1}^{N}f(x_n)f(x'_n) - \left(w_1\hat{\mu}^2+w_2\hat{\mu}'^2+w_3\hat{\mu}\hat{\mu}'\right),
\end{align}
with real-valued parameters $w_1,w_2,w_3$ such that $w_1+w_2+w_3=1$.
Unfortunately, the following theorem states that all of the estimators $V_1,V_2,V_3$ are biased.

\begin{theorem}\label{thm:bias}
For reals $w_1,w_2,w_3$ such that $w_1+w_2+w_3=1$, we have
\begin{align*}
\EE\left[V\right] & = \Var_Y\left[\EE_{X|Y}[f]\right] - \frac{(w_1+w_2)\Var_X[f]+w_3\Var_Y\left[\EE_{X|Y}[f]\right]}{N} \\
& = \Var_Y\left[\EE_{X|Y}[f]\right] - \frac{\Var_X[f]-w_3\EE_Y\left[\Var_{X|Y}[f]\right]}{N}.
\end{align*}
In particular, we have
\begin{align*}
\EE\left[V_1\right] & = \Var_Y\left[\EE_{X|Y}[f]\right] - \frac{\Var_X[f]}{N}, \\
\EE\left[V_2\right] & = \Var_Y\left[\EE_{X|Y}[f]\right] - \frac{\Var_X[f]+\Var_Y\left[\EE_{X|Y}[f]\right]}{2N}, \\
\EE\left[V_3\right] & = \Var_Y\left[\EE_{X|Y}[f]\right] - \frac{\Var_Y\left[\EE_{X|Y}[f]\right]}{N}.
\end{align*}
\end{theorem}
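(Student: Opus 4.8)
The plan is to compute $\EE[V]$ by evaluating the expectation of each of the four quadratic statistics in the definition (\ref{eq:V_est}) of $V$ -- namely $\frac{1}{N}\sum_n f(x_n)f(x'_n)$, $\hat{\mu}^2$, $\hat{\mu}'^2$, and $\hat{\mu}\hat{\mu}'$ -- and then combining them using the constraint $w_1+w_2+w_3=1$. To organize the bookkeeping I would first introduce the shorthand $g(y) := \EE_{X|Y=y}[f]$ for the inner conditional expectation, so that $\mu = \EE_Y[g(Y)]$ and $\Var_Y\left[\EE_{X|Y}[f]\right] = \EE_Y[g(Y)^2] - \mu^2$. The structural observation driving everything is that, conditionally on $y_n$, the two pick-freeze draws $x_n$ and $x'_n$ are independent, which converts the expectation of a product of function values into the expectation of a product of conditional means.

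First I would treat the cross term for a fixed $n$. Conditioning on $y_n$ and using the conditional independence of $x_n,x'_n$ gives $\EE[f(x_n)f(x'_n)] = \EE_Y[g(Y)^2]$, so the leading statistic $\frac{1}{N}\sum_n f(x_n)f(x'_n)$ is exactly unbiased for $\EE_Y[g(Y)^2] = \Var_Y\left[\EE_{X|Y}[f]\right] + \mu^2$. This term carries no $1/N$ correction and supplies the main part of the estimate.

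The bulk of the work is the three remaining terms, each a double sum $\frac{1}{N^2}\sum_{n,m}$ that I would split into diagonal ($n=m$) and off-diagonal ($n\neq m$) contributions. Here I would use two facts: first, the blocks $(y_n,x_n,x'_n)$ are i.i.d.\ across $n$, so every off-diagonal term factorizes into $\mu^2$; second, each of $x_n$ and $x'_n$ has $p_X$ as its marginal, so $\EE[f(x_n)^2] = \EE[f(x'_n)^2] = \EE_X[f^2] = \Var_X[f] + \mu^2$. Counting the $N$ diagonal and $N(N-1)$ off-diagonal terms yields $\EE[\hat{\mu}^2] = \EE[\hat{\mu}'^2] = \Var_X[f]/N + \mu^2$, whereas the diagonal of $\hat{\mu}\hat{\mu}'$ is again the cross term $\EE_Y[g(Y)^2]$, giving $\EE[\hat{\mu}\hat{\mu}'] = \Var_Y\left[\EE_{X|Y}[f]\right]/N + \mu^2$.

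Finally I would assemble the pieces. Since the leading statistic contributes $+\mu^2$ and each of the three subtracted terms contributes an identical $\mu^2$, the constraint $w_1+w_2+w_3=1$ forces all $\mu^2$ contributions to cancel, leaving the first displayed formula. The second displayed formula then follows by inserting the decomposition (\ref{eq:decomp}) in the form $\Var_X[f] = \Var_Y\left[\EE_{X|Y}[f]\right] + \EE_Y\left[\Var_{X|Y}[f]\right]$ together with $w_1+w_2 = 1-w_3$, and the special cases $V_1,V_2,V_3$ are read off from $(w_1,w_2,w_3) = (1,0,0)$, $(\tfrac14,\tfrac14,\tfrac12)$, and $(0,0,1)$, which are the coefficients appearing in $\hat{\mu}^2$, $\big(\tfrac{\hat\mu+\hat\mu'}{2}\big)^2$, and $\hat\mu\hat\mu'$. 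I do not expect a genuine obstacle, since this is a routine second-moment computation; the one point that requires care -- and that is in fact the conceptual heart of the result -- is distinguishing the diagonal of $\hat{\mu}^2$, which reduces to $\EE_X[f^2]$ and hence $\Var_X[f]$, from the diagonal of $\hat{\mu}\hat{\mu}'$, which reduces to $\EE_Y[g(Y)^2]$ and hence $\Var_Y\left[\EE_{X|Y}[f]\right]$. This asymmetry is precisely what produces the differing biases of $V_1$ and $V_3$.
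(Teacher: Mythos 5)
Your proposal is correct and follows essentially the same route as the paper's proof: compute the expectation of each of the four quadratic statistics, split the double sums into diagonal and off-diagonal parts, use the conditional independence of $x_n$ and $x'_n$ given $y_n$ to identify the diagonal of $\hat{\mu}\hat{\mu}'$ with $\Var_Y\left[\EE_{X|Y}[f]\right]+\mu^2$, and let the constraint $w_1+w_2+w_3=1$ cancel the $\mu^2$ terms. The only cosmetic difference is your shorthand $g(y)=\EE_{X|Y=y}[f]$, which the paper writes out explicitly as integrals.
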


\begin{proof}
First we have
\begin{align*}
& \quad \EE\left[\frac{1}{N}\sum_{n=1}^{N}f(x_n)f(x'_n)\right] = \frac{1}{N}\sum_{n=1}^{N}\EE\left[f(x_n)f(x'_n)\right] \\
& = \frac{1}{N}\sum_{n=1}^{N}\int_{\Omega_Y}\int_{\Omega_X} \int_{\Omega_X} f(x)f(x')p_{X|Y=y}(x)p_{X|Y=y}(x') p_Y(y) \rd x \rd x' \rd y\\
& = \int_{\Omega_Y}\int_{\Omega_X} \int_{\Omega_X} f(x)f(x')p_{X|Y=y}(x)p_{X|Y=y}(x') p_Y(y) \rd x \rd x' \rd y\\
& = \Var_Y\left[\EE_{X|Y}[f]\right]+\mu^2,
\end{align*}
where the second equality stems from the fact that $x_n$ and $x'_n$ are sampled independently and randomly from the same probability density $p_{X|Y=y_n}$ for a randomly sampled $y_n$.
Since the quantity shown in the third line is nothing but the first term of (\ref{eq:Var_E2}), see (\ref{eq:Var_E2_1}), we have the last equality.
In what follows, we focus on the expected value of the second term of $V$.

Since $x_1,\ldots,x_N$ are i.i.d.\ random samples from $p_X$, we have
\begin{align*}
& \quad \EE\left[\hat{\mu}^2\right] = \frac{1}{N^2}\sum_{m,n=1}^{N}\EE\left[f(x_m)f(x_n)\right] \\
& = \frac{1}{N^2}\sum_{n=1}^{N}\EE\left[\left(f(x_n)\right)^2\right] +  \frac{1}{N^2}\sum_{\substack{m,n=1 \\ m\neq n}}^{N}\EE\left[f(x_m)\right]\EE\left[f(x_n)\right] \\
& = \frac{1}{N^2}\sum_{n=1}^{N}\left( \Var_X[f]+\mu^2 \right) + \frac{1}{N^2}\sum_{\substack{m,n=1 \\ m\neq n}}^{N}\mu^2 = \mu^2+\frac{\Var_X[f]}{N}.
\end{align*}
The same argument gives
\begin{align*}
\EE\left[\hat{\mu}'^2\right] = \mu^2+\frac{\Var_X[f]}{N}.
\end{align*}
Moreover, in a similar way, we obtain
\begin{align*}
& \quad \EE\left[ \hat{\mu}\hat{\mu}' \right] = \frac{1}{N^2}\sum_{m,n=1}^{N}\EE\left[f(x_m)f(x'_n)\right] \\
& = \frac{1}{N^2}\sum_{n=1}^{N}\EE\left[f(x_n)f(x'_n)\right] +  \frac{1}{N^2}\sum_{\substack{m,n=1 \\ m\neq n}}^{N}\EE\left[f(x_m)\right]\EE\left[f(x'_n)\right] \\
& = \frac{1}{N^2}\sum_{n=1}^{N}\left(\Var_Y\left[\EE_{X|Y}[f]\right] +\mu^2 \right) + \frac{1}{N^2}\sum_{\substack{m,n=1 \\ m\neq n}}^{N}\mu^2 = \mu^2+\frac{\Var_Y\left[\EE_{X|Y}[f]\right]}{N}.
\end{align*}
Thus in total we have
\begin{align*}
\EE\left[w_1\hat{\mu}^2+w_2\hat{\mu}'^2+w_3\hat{\mu}\hat{\mu}'\right] = \mu^2+\frac{(w_1+w_2)\Var_X[f]+w_3\Var_Y\left[\EE_{X|Y}[f]\right]}{N},
\end{align*}
from which the first equality for $V$ in the theorem obviously follows.
The second equality for $V$ in the theorem follows from the equality $w_1+w_2+w_3=1$ and the identity (\ref{eq:decomp}).
The results for $V_1,V_2,V_3$ can be easily proven by choosing $w_1,w_2,w_3$ accordingly.
\end{proof}

Since it follows from the equation (\ref{eq:decomp}) that $0<\Var_Y\left[\EE_{X|Y}[f]\right] \leq \Var_X[f]$, the bias of $V_3$ is smaller or equal to that of $V_2$ in absolute value, which itself is smaller or equal to that of $V_1$.
Thus we recommend to use the estimator $V_3$ especially for small $N$.
If we restrict ourselves to the case $w_1,w_2,w_3\geq 0$, the estimator $V_3$ is optimal in this regard among possible realizations of $V$.
If such a restriction is not taken into account and $w_3$ can be set closed to $\Var_X[f]/\EE_Y\left[\Var_{X|Y}[f]\right] (>1)$, the bias can be made smaller.
Note, however, that the bias for every realized estimator decays at the rate $N^{-1}$, which is faster than that of the Monte Carlo rmse $N^{-1/2}$, so that the bias is negligible for large $N$.

%%%%%%%%%%%%%%%%%%%%%%%%%%%%%%%%%%%%%%%%%%%%%%%%%%%%%%%%%%%%%%%%%%%%%%%%%%%%%%%%%%%%%%%%%%%%%%%%%%%
%%%%%%%%%%%%%%%%%%%%%%%%%%%%%%%%%%%%%%%%%%%%%%%%%%%%%%%%%%%%%%%%%%%%%%%%%%%%%%%%%%%%%%%%%%%%%%%%%%%
%%%%%%%%%%%%%%%%%%%%%%%%%%%%%%%%%%%%%%%%%%%%%%%%%%%%%%%%%%%%%%%%%%%%%%%%%%%%%%%%%%%%%%%%%%%%%%%%%%%
\section{Bias correction}\label{sec:3}
Here we give a bias correction of the estimator $V$ for $N\geq 2$, from which bias corrections of the estimators $V_1,V_2,V_3$ are given directly.
Let us denote
\begin{align*}
s^2 = \frac{1}{N-1}\sum_{n=1}^{N}\left( f(x_n)-\hat{\mu}\right)^2 \quad \text{and}\quad s'^2 = \frac{1}{N-1}\sum_{n=1}^{N}\left( f(x'_n)-\hat{\mu}'\right)^2.
\end{align*}
When $w_3\neq N$, a bias corrected estimator of $V$ is given by
\begin{align*}
\tilde{V} := \frac{N}{N-w_3}\left(V+\frac{w_1s^2+w_2s'^2}{N}\right).
\end{align*}
\begin{theorem}\label{thm:unbias}
\begin{align*}
\EE\left[\tilde{V}\right] = \Var_Y\left[\EE_{X|Y}[f]\right].
\end{align*}
\end{theorem}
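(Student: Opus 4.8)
The plan is to reduce everything to two ingredients: the bias formula for $V$ already established in Theorem~\ref{thm:bias}, and the elementary fact that a Bessel-corrected sample variance is unbiased. The auxiliary quantities $s^2$ and $s'^2$ have been chosen precisely so that their expectations reproduce the $\Var_X[f]$-part of the bias of $V$; the prefactor $N/(N-w_3)$ is then designed to rescale away the remaining $\Var_Y\!\left[\EE_{X|Y}[f]\right]$-part.

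First I would determine $\EE[s^2]$ and $\EE[s'^2]$. The point to check is that, even though $x_n$ and $x'_n$ are generated from the common conditional density $p_{X|Y=y_n}$, each of the families $x_1,\dots,x_N$ and $x'_1,\dots,x'_N$ consists of i.i.d.\ draws from the marginal $p_X$: sampling $y_n\sim p_Y$ and then $x_n\sim p_{X|Y=y_n}$ integrates out to $p_X$, and independence across $n$ follows from the independence of the $y_n$. Consequently $s^2$ and $s'^2$ are the ordinary unbiased sample variances of i.i.d.\ $p_X$-samples, so that $\EE[s^2]=\EE[s'^2]=\Var_X[f]$.

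With these expectations in hand the remainder is bookkeeping. Adding $\EE\!\left[(w_1 s^2+w_2 s'^2)/N\right]=(w_1+w_2)\Var_X[f]/N$ to the bias formula
\[
\EE[V]=\Var_Y\!\left[\EE_{X|Y}[f]\right]-\frac{(w_1+w_2)\Var_X[f]+w_3\Var_Y\!\left[\EE_{X|Y}[f]\right]}{N}
\]
cancels the $\Var_X[f]$ contribution exactly, leaving $\frac{N-w_3}{N}\Var_Y\!\left[\EE_{X|Y}[f]\right]$; multiplying by $N/(N-w_3)$ then yields $\EE[\tilde V]=\Var_Y\!\left[\EE_{X|Y}[f]\right]$, as claimed.

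I do not expect a genuine obstacle here, since the calculation is routine once the two expectations are identified. The only step needing a little care is recognizing that $s^2$ and $s'^2$ estimate the \emph{total} variance $\Var_X[f]$ and not a conditional one: they are formed from the $x_n$ (resp.\ $x'_n$) together with their grand means $\hat\mu$ (resp.\ $\hat\mu'$) without regard to the conditioning on $Y$, so the full variation across both $Y$ and $X\mid Y$ enters. This is exactly why they match the $(w_1+w_2)\Var_X[f]/N$ term of Theorem~\ref{thm:bias} rather than $\EE_Y\!\left[\Var_{X|Y}[f]\right]$.
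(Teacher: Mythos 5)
Your proposal is correct and follows essentially the same route as the paper: both rely on the fact that $x_1,\dots,x_N$ (resp.\ $x'_1,\dots,x'_N$) are i.i.d.\ from $p_X$ so that $\EE[s^2]=\EE[s'^2]=\Var_X[f]$, and then combine this with the bias formula of Theorem~\ref{thm:bias} and linearity of expectation. Your added observation that $s^2$ estimates the total variance $\Var_X[f]$ rather than a conditional one is a worthwhile clarification but does not change the argument.
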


\begin{proof}
Noting again that $x_1,\ldots,x_N$ are i.i.d.\ random samples from $p_X$, it is well known that $\EE[s^2]=\EE[s'^2]=\Var_X[f]$.
Thus by using the linearity of the expectation and the result in Theorem~\ref{thm:bias} we have
\begin{align*}
& \quad \EE\left[\tilde{V}\right] = \frac{N}{N-w_3}\left(\EE\left[V\right]+\frac{w_1\EE\left[s^2\right]+w_2\EE\left[s'^2\right]}{N}\right) \\
& = \frac{N}{N-w_3}\left(\Var_Y\left[\EE_{X|Y}[f]\right] - \frac{(w_1+w_2)\Var_X[f]+w_3\Var_Y\left[\EE_{X|Y}[f]\right]}{N}+\frac{w_1+w_2}{N}\Var_X[f]\right) \\
& = \Var_Y\left[\EE_{X|Y}[f]\right].
\end{align*}
Hence the result follows.
\end{proof}

Choosing $w_1,w_2,w_3$ accordingly, bias corrections of the estimators $V_1,V_2,V_3$ are given as follows:
\begin{align*}
\tilde{V}_1 & := V_1+\frac{s^2}{N}, \\
\tilde{V}_2 & := \frac{2N}{2N-1}\left( V_2+\frac{s^2+s'^2}{4N}\right), \\
\tilde{V}_3 & := \frac{N}{N-1}V_3.
\end{align*}
The bias correction of $V_2$ is of the same form due to Owen \cite{O13}, which was given in the context of variance-based sensitivity analysis.
We would note that the corrected estimators $\tilde{V}_1,\tilde{V}_2,\tilde{V}_3$ still hold the important property of $V_1,V_2,V_3$, i.e., they require two function evaluations for each $n$ (instead of three as in $U$).

\begin{remark}
The estimator $\tilde{V}$ with the choice
\begin{align*}
w_1=w_2=\frac{N-1}{2(2N-1)} \quad \text{and}\quad w_3=\frac{N}{2N-1}
\end{align*}
is nothing but the estimator $W$ given in (\ref{eq:sun_et_al}) with $n_k=2$ for all $k$.
Since $w_1=w_2\to 1/4$ and $w_3\to 1/2$ when $N\to \infty$, the estimator $W$ with $n_k=2$ asymptotically coincides with the estimator $\tilde{V}_2$, which itself asymptotically coincides with the estimator $V_2$.
\end{remark}
%%%%%%%%%%%%%%%%%%%%%%%%%%%%%%%%%%%%%%%%%%%%%%%%%%%%%%%%%%%%%%%%%%%%%%%%%%%%%%%%%%%%%%%%%%%%%%%%%%%
%%%%%%%%%%%%%%%%%%%%%%%%%%%%%%%%%%%%%%%%%%%%%%%%%%%%%%%%%%%%%%%%%%%%%%%%%%%%%%%%%%%%%%%%%%%%%%%%%%%
%%%%%%%%%%%%%%%%%%%%%%%%%%%%%%%%%%%%%%%%%%%%%%%%%%%%%%%%%%%%%%%%%%%%%%%%%%%%%%%%%%%%%%%%%%%%%%%%%%%
\section{Extension to compute higher order moments}\label{sec:4}
Here we show that our approach can be extended to construct non-nested Monte Carlo estimators for higher order moments of a conditional expectation.
For an integer $m\geq 3$, the $m$-th crude moment and central moment are given by
\begin{align*}
\EE_Y\left[\left(\EE_{X|Y}[f]\right)^m\right] := \int_{\Omega_Y}\left( \int_{\Omega_X}f(x)p_{X|Y=y}(x)\rd x \right)^m p_Y(y) \rd y,
\end{align*}
and
\begin{align*}
\EE_Y\left[\left(\EE_{X|Y}[f]-\mu \right)^m\right] := \int_{\Omega_Y}\left( \int_{\Omega_X}f(x)p_{X|Y=y}(x)\rd x -\mu \right)^m p_Y(y) \rd y,
\end{align*}
respectively.
In what follows, we only consider an extension of the estimator $V_3$ and its bias correction for sake of simplicity.

Let us consider a random variable
\begin{align*}
W_m := \frac{1}{N}\sum_{n=1}^{N}\prod_{j=1}^{m}f(x_n^{(j)}) ,
\end{align*}
where, for each $n$, we first sample $y_n$ randomly from $p_Y$ and then sample $x_n^{(j)}$ independently and randomly from $p_{X|Y=y_n}$ for $j=1,\ldots,m$.
Then $W_m$ can be easily shown to be an unbiased estimator of the $m$-th crude moment as follows:
\begin{align*}
\EE\left[W_m\right] & = \frac{1}{N}\sum_{n=1}^{N}\EE\left[\prod_{j=1}^{m}f(x_n^{(j)})\right] \\
& = \frac{1}{N}\sum_{n=1}^{N}\int_{\Omega_Y}\prod_{j=1}^{m}\left( \int_{\Omega_X}f(x^{(j)})p_{X|Y=y}(x^{(j)})\rd x^{(j)} \right) p_Y(y) \rd y \\
& = \int_{\Omega_Y}\left( \int_{\Omega_X}f(x)p_{X|Y=y}(x)\rd x \right)^m p_Y(y) \rd y = \EE_Y\left[\left(\EE_{X|Y}[f]\right)^m\right].
\end{align*}
Furthermore, for $j=1,\ldots,m$, we denote the sample mean by
\begin{align*}
\hat{\mu}^{(j)} = \frac{1}{N}\sum_{n=1}^{N}f(x_n^{(j)}).
\end{align*}
As a generalization of the estimator $V_3$, a non-nested Monte Carlo estimator for the $m$-th central moment is given by
\begin{align*}
Z_m  :=  \frac{1}{N}\sum_{n=1}^{N}\prod_{j=1}^{m}\left( f(x_n^{(j)})- \hat{\mu}^{(j)}\right).
\end{align*}
Note that both the estimators $W_m$ and $Z_m$ require $m$ function evaluations for each $n$.
Although $Z_m$ is a biased estimator for the $m$-th central moment, it can be corrected as follows.
For the case $m=3$, the corrected estimator is given by
\begin{align*}
\tilde{Z}_3  :=  \frac{N^2}{(N-1)(N-2)}Z_3 ,
\end{align*}
for $N>2$. Furthermore, for the case $m=4$, the corrected estimator is given by
\begin{align*}
\tilde{Z}_4  :=  \frac{N^2}{(N-1)(N-2)(N-3)}\left((N+1)Z_4 - 3(N-1)S_2^2\right) ,
\end{align*}
for $N>3$, where $S_2$ is defined by
\begin{align*}
S_2  :=  \frac{1}{6}\sum_{\substack{u\subset \{1,\ldots,4\}\\ |u|=2}}\frac{1}{N}\sum_{n=1}^{N}\prod_{j\in u}\left( f(x_n^{(j)})- \hat{\mu}^{(j)}\right).
\end{align*}
The unbiasedness of the estimators $\tilde{Z}_3$ and $\tilde{Z}_4$ can be proven in a way similar to that of Theorem~\ref{thm:unbias}, so that we omit it.
Moreover a bias correction can be established for any $m$ as long as $N\geq m$, although the corrected estimator shall be given in a complicated form when $m>4$.

%%%%%%%%%%%%%%%%%%%%%%%%%%%%%%%%%%%%%%%%%%%%%%%%%%%%%%%%%%%%%%%%%%%%%%%%%%%%%%%%%%%%%%%%%%%%%%%%%%%
%%%%%%%%%%%%%%%%%%%%%%%%%%%%%%%%%%%%%%%%%%%%%%%%%%%%%%%%%%%%%%%%%%%%%%%%%%%%%%%%%%%%%%%%%%%%%%%%%%%
%%%%%%%%%%%%%%%%%%%%%%%%%%%%%%%%%%%%%%%%%%%%%%%%%%%%%%%%%%%%%%%%%%%%%%%%%%%%%%%%%%%%%%%%%%%%%%%%%%%
\section{Numerical experiments}\label{sec:5}
Finally we conduct numerical experiments for simple test examples used in \cite{SH03}.
Here we aim to emphasize how easily one can implement our non-nested unbiased estimators without an additional pilot estimation step as done in \cite{SAS11}, and to demonstrate that our estimators can achieve the Monte Carlo rmse of order $C^{-1/2}$.
A performance comparison with other estimators including one by Sun et al. \cite{SAS11} is beyond the scope of this letter and shall be done in future work.

Three test examples due to Steckley and Henderson \cite{SH03} are given by:
\begin{description}
\item[Example~1] $Y\sim Beta(4,4)$, $X|(Y=y) \sim N(y,0.5)$, $f(x)=x$.
\item[Example~2] $Y\sim Beta(4,4)$, $X|(Y=y) \sim N(y,y^2)$, $f(x)=x$.
\item[Example~3] $Y \sim 1+Beta(4,4)$, $X|(Y=y) \sim \exp(1/y)$, $f(x)=x$.
\end{description}
We focus on the corrected estimator $\tilde{V}_3$ for estimating $\Var_Y\left[\EE_{X|Y}[f]\right]$ and its extensions $\tilde{Z}_3$ and $\tilde{Z}_4$ for estimating the third and fourth central moments of $\EE_{X|Y}[f]$, respectively.
All of the estimators $\tilde{V}_3,\tilde{Z}_3,\tilde{Z}_4$ can be implemented quite easily in MATLAB as follows.
Let us take Example~1 as an instance. $\tilde{V}_3$ can be written in only six lines as\\
\\
{\tt y = betarnd(4,4,N,1);\\
x1 = normrnd(y(:),sqrt(.5));\\
x2 = normrnd(y(:),sqrt(.5));\\
mu1 = mean(x1);\\
mu2 = mean(x2);\\
V = sum((x1(:)-m1).*(x2(:)-m2))/(N-1);
}\\
\\
It should be obvious that $\tilde{Z}_3$ and $\tilde{Z}_4$ are similarly written in several lines.

We run 1000 independent replications of our non-nested estimation for each setting, and compute the sample variance of those outputs.
Figure~\ref{fig:variance} shows the results of the variance of a conditional expectation for all the examples.
As mentioned in the first section, due to the non-nested forms of our estimators, we can expect that our estimators achieve the Monte Carlo rmse of order $C^{-1/2}$, which is in fact supported by these experimental results.
Figures~\ref{fig:third_moment} and \ref{fig:fourth_moment} show the results of the third and fourth central moments of a conditional expectation, respectively, for all the examples.
Again we can observe the Monte Carlo rmse of order $C^{-1/2}$.
As a future work, in order to enhance the performance of our estimators, it must be interesting to study how one can reduce the variance of our estimators, for instance, by using variance reduction techniques.
%%%%%%%%%%%%%%%%%%%%%%%%%%%%%%%%%%%%%%%%%%%%%%%%%%%%%%%%%%%
%%%%%%%%%%%%%%%%%%%%%%%%%%%%%%%%%%%%%%%%%%%%%%%%%%%%%%%%%%%

\begin{figure}
\begin{center}
\includegraphics[width=8cm]{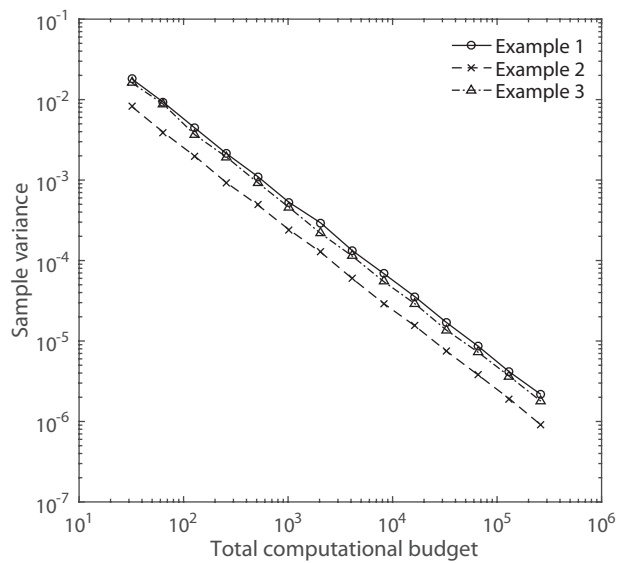}
\caption{Sample variance of $\tilde{V}_3$ for the variance of a conditional expectation with various total computational budgets.}
\label{fig:variance}
\end{center}
\end{figure}

\begin{figure}
\begin{center}
\includegraphics[width=8cm]{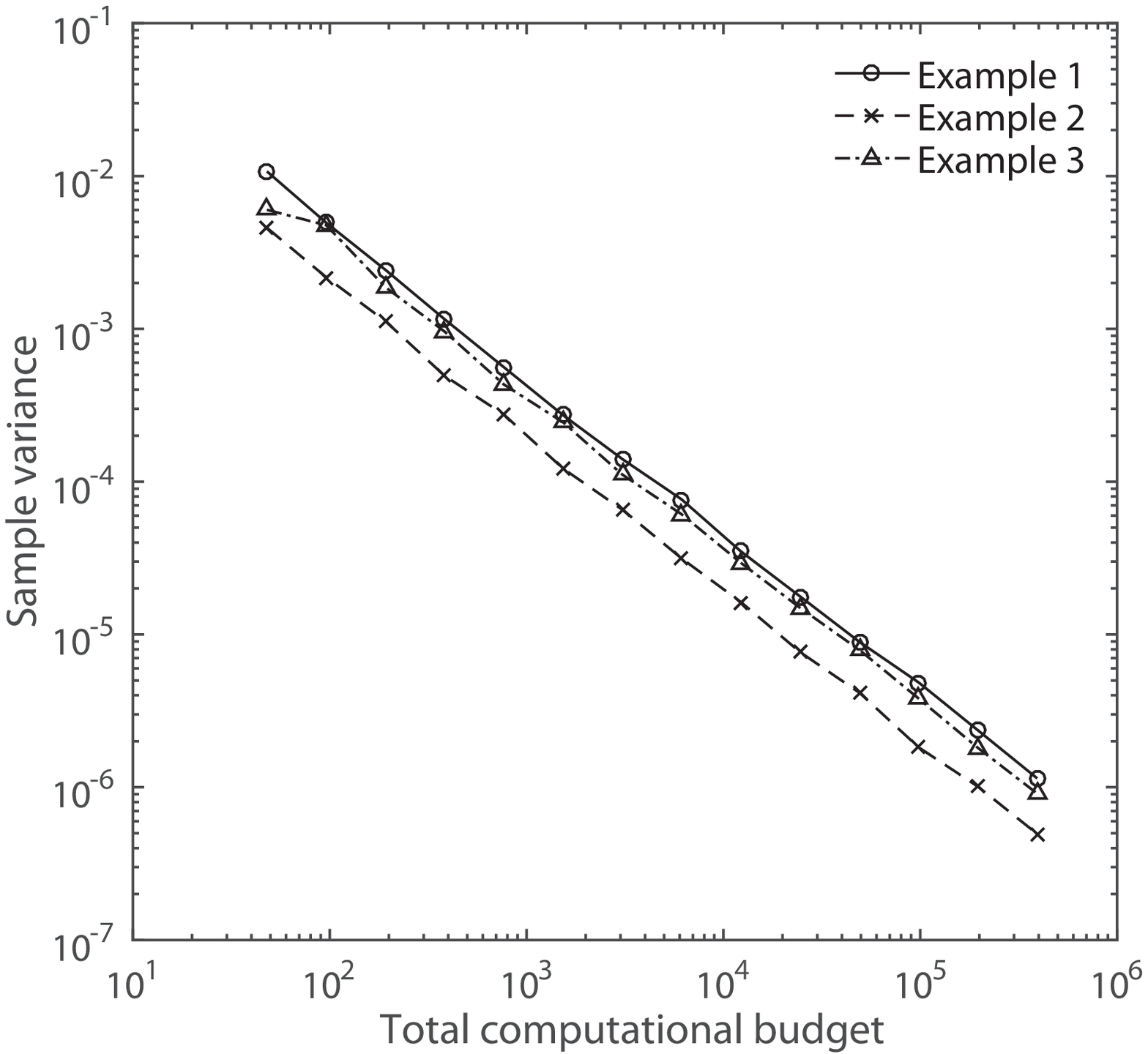}
\caption{Sample variance of $\tilde{Z}_3$ for the third central moment of a conditional expectation with various total computational budgets.}
\label{fig:third_moment}
\end{center}
\end{figure}

\begin{figure}
\begin{center}
\includegraphics[width=8cm]{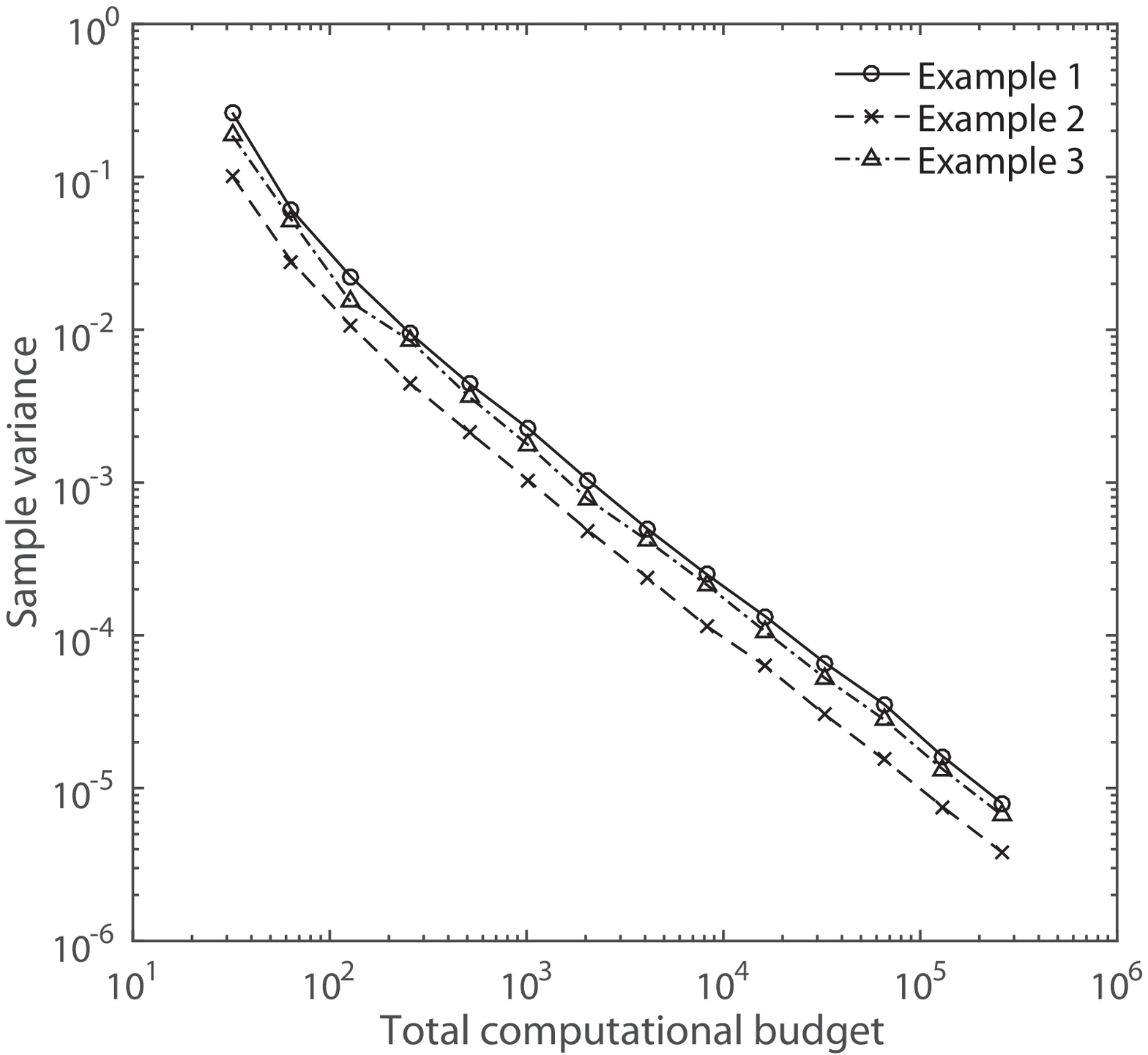}
\caption{Sample variance of $\tilde{Z}_4$ for the fourth central moment of a conditional expectation with various total computational budgets.}
\label{fig:fourth_moment}
\end{center}
\end{figure}

\end{document}